\DeclareMathOperator{\Dual}{asSeries}
\DeclareMathOperator{\withdiv}{WithDiv}
\DeclareMathOperator{\withinv}{WithInvDiv}
\begin{document}

\def\newstr{\par\noindent}
\def\ms{\medskip\newstr}

\renewcommand{\proofname}{Proof}

\newcommand{\ifdraft}[1]{#1}
\definecolor{aocolour}{rgb}{0.7,0.8,1}
\definecolor{vmcolour}{rgb}{1,0.8,0.7}
\newcommand{\ao}[1]{\ifdraft{\noindent\colorbox{aocolour}{A.O.: #1}}}
\newcommand{\vm}[1]{\ifdraft{\noindent\colorbox{vmcolour}{V.M.: #1}}}

\newcommand{\Z}{\mathbb{Z}}
\newcommand{\N}{\mathbb{N}}
\newcommand{\R}{\mathbb{R}}
\newcommand{\Q}{\mathbb{Q}}
\newcommand{\K}{\mathbb{K}}
\newcommand{\Cm}{\mathbb{C}}
\newcommand{\Pm}{\mathbb{P}}
\newcommand{\Zero}{\mathbb{O}}
\newcommand{\F}{\mathbb{F}_2}
\newcommand{\ilim}{\int\limits}
\newcommand{\impl}{\Rightarrow}
\newcommand{\set}[2]{\{ \, #1 \mid #2 \, \}}

\newcommand{\A}{\mathcal{A}}
\newcommand{\B}{\mathcal{B}}

\theoremstyle{plain}
\newtheorem{thm}{Theorem}[section]
\newtheorem{lm}{Lemma}[section]
\newtheorem{conjecture}{Conjecture}
\newtheorem*{st}{Statement}
\newtheorem*{prop}{Properties}

\newtheorem{oldtheorem}{Theorem}
\renewcommand{\theoldtheorem}{\Alph{oldtheorem}}

\theoremstyle{definition}
\newtheorem*{defn}{Definition}
\newtheorem{ex}{Example}[section]
\newtheorem*{cor}{Corollary}

\theoremstyle{remark}
\newtheorem*{rem}{Remark}
\newtheorem*{note}{Note}

\title{Finer characterization of bounded languages described by GF(2)-grammars}
\author{Vladislav Makarov, Marat Movsin \\ Saint Petersburg State University}

\maketitle

\begin{abstract}
GF(2)-grammars are a somewhat recently introduced grammar family that have some unusual 
algebraic properties and are closely connected to unambiguous grammars. In ``Bounded languages described by GF(2)-grammars'', Makarov proved a necessary condition for subsets of $a_1^* a_2^* \cdots a_k^*$ to be described by some GF(2)-grammar. By extending these methods further, we 
prove an even stronger upper bound for these languages. Moreover, we establish a lower bound
that closely matches the proven upper bound. Also, we prove the exact characterization for the special case of \emph{linear} GF(2)-grammars. Finally, by using the previous result, we show that the class of languages described by linear GF(2)-grammars is not closed under GF(2)-concatenation.

\textbf{Keywords:} Formal grammars, finite fields, bounded languages, linear grammars.

\end{abstract}

\sloppy

\section{Introduction}
	
The paper assumes that you have already read and understood the previous 
paper about the bounded languages and GF(2)-grammars by Makarov~\cite{Makarov_DLT}
(an extended version of that paper is easily available on arXiv). 
For a general introduction to the topic and the methods, refer to said paper.

Reading the earlier
papers (the original paper aboud GF(2)-operations by Bakinova et al.~\cite{gf2} and
the paper about basic properties of GF(2)-grammars by Makarov and Okhotin~\cite{grammars-gf2})
is not required, but may help with understanding the context of some of our results.

\section{A general upper bound for subsets of \texorpdfstring{$a_1^* a_2^* \ldots a_k^*$}
{a1* a2* ... ak*}}

Similarly to $R_{a, b}$, $R_{a, b, c}$ and $R_{a_1, a_2, \ldots, a_k}$ 
from the previous paper we want to define some kind of a ring structure. 
Specifically, we want to consider something like $R_{a_1, a_2, \ldots, a_k}$,
but with only some pairs $(a_i, a_j)$ being allowed to define a polynomial
in the numerator. Let us state what we need precisely.

Suppose that we are working over the alphabet $\Sigma = \{a_1, a_2, \ldots, a_k\}$
and are using the natural correspondence between subsets of $\Sigma^*$ and the series from
$\F[[a_1, a_2, \ldots, a_k]]$, defined by $\Dual$. As a bit of abuse of notation and language, we will sometimes
talk about the series as if they were languages and vice versa. Then, consider the following
definition.

\begin{defn} Suppose that $T$ is some subring of $\F[[a_1, a_2, \ldots, a_k]]$ that contains
all polynomials. In other words, $\F[a_1, a_2, \ldots a_k] \subset T \subset \F[[a_1, a_2, \ldots, a_k]]$. Consider the set $X_k := \{(1, 2), (1, 3), \ldots, (1, k), (2, 3), \ldots, (2, k), \ldots, (k - 1, k)\}$ of all all pairs $(i, j)$, such that $1 \leqslant i < j \leqslant n$. Then, for every subset $Y$ of $X_k$, 
we can define a subset $\withdiv(T, Y)$  of $\F[[a_1, a_2, \ldots, a_k]]$ in the following way. 
An element of $\F[[a_1, a_2, \ldots, a_k]]$ lies in $\withdiv(T, Y)$ if and only if it can 
be represented as $\dfrac{t}{\prod_{(i, j) \in Y} p_{(i, j)}}$
for some $t \in T$ and $p_{(i, j)} \in \F[a_i, a_j]$.
\end{defn}
\begin{ex}
For example, $R_{a_1, a_2, \ldots, a_k}$ is $\withdiv(S_k, X_k)$ for some ring $S_k$. 
\end{ex}
\begin{defn}
More precisely, let $S_k$ be the set 
of all power series that can be represented as $\sum_{i=1}^n A_{i, 1} A_{i, 2} \ldots A_{i, k}$
for some $n \geqslant 0$ and $A_{i, j} \in \A_j$. Or, in other words, $S_k$ is the 
subring of $\F[[a_1, a_2, \ldots, a_k]]$ generated by all algebraic power series in one variable.
\end{defn}
\begin{ex} As a more complicated example, consider $k = 3$ and the set $\withdiv(\F[a_1, a_2, a_3], \{(1, 2), (2, 3)\})$. Said set consists of all power series that can be represented as $p/(qr)$ for some polynomials $p \in \F[a, b, c]$, $q \in \F[a, b]$ and $r \in \F[b, c]$.
\end{ex}
\begin{rem} It is important to note that $\withdiv(T, Y)$ includes only power series. 
For example, $\withdiv(\F[a_1, a_2], \{(1,2)\})$ is not the set of rational functions $\F(a_1, a_2)$,
but the set $\F(a_1, a_2) \cap \F[[a_1, a_2]]$.
\end{rem}

Let us prove the following important lemma:

\begin{lm} The set $\withdiv(T, Y)$ is actually a ring.
\end{lm}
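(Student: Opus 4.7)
The plan is to verify the ring axioms directly. Since $\withdiv(T, Y)$ sits inside the power series ring $\F[[a_1, a_2, \ldots, a_k]]$, it suffices to show that it contains $0$ and $1$ and is closed under addition and multiplication. Containment of $0$ and $1$ is immediate from the definition, by taking $t = 0$ or $t = 1$ and every $p_{(i,j)} = 1$.

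For closure, I would run the usual common-denominator argument. Let $s, s' \in \withdiv(T, Y)$ with representations $s = t/q$ and $s' = t'/q'$, where $q = \prod_{(i,j) \in Y} p_{(i,j)}$, $q' = \prod_{(i,j) \in Y} p'_{(i,j)}$, each $p_{(i,j)}, p'_{(i,j)} \in \F[a_i, a_j]$, and $t, t' \in T$. The identities $sq = t$ and $s'q' = t'$ hold in $\F[[a_1, \ldots, a_k]]$ (multiplication of a power series by a polynomial being unproblematic), so by distributivity
\[
(s + s')\, qq' = tq' + t'q, \qquad (ss')\, qq' = tt'.
\]
The two right-hand sides lie in $T$ because $T$ is a ring containing every polynomial; meanwhile, the denominator factors as $qq' = \prod_{(i,j) \in Y} \bigl(p_{(i,j)} p'_{(i,j)}\bigr)$, with each factor $p_{(i,j)} p'_{(i,j)}$ again a polynomial in $\F[a_i, a_j]$. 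This exhibits both $s + s'$ and $ss'$ in the required form, completing the closure check.

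I do not expect any real obstacle. The only technical point is that one must interpret the expression ``$s = t/q$'' as the equation $sq = t$ inside the power series ring; this is well-defined because $\F[[a_1, \ldots, a_k]]$ is an integral domain and $q$ is a nonzero polynomial. All the substance of the lemma is compressed into two transparent observations: $T$ is closed under multiplication by polynomials, and $Y$-factored denominators multiply pair-by-pair into other $Y$-factored denominators.
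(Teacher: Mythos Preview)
Your argument is correct and follows essentially the same route as the paper's own proof: both verify closure by the common-denominator trick, noting that the pairwise products $p_{(i,j)} p'_{(i,j)}$ stay in $\F[a_i, a_j]$ and that the cross-multiplied numerators lie in $T$ because $T$ contains all polynomials. The only difference is cosmetic---you spell out the $0$, $1$, and multiplication cases and the integral-domain interpretation of $t/q$, whereas the paper just works out the addition case and leaves the rest implicit.
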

\begin{proof} It it easy to check that this set is closed under all necessary operations. For example,
suppose that we have two elements $\dfrac{t_1}{\prod_{(i,j) \in Y} p_{(i, j)}}$ and $\dfrac{t_2}{\prod_{(i, j) \in Y} q_{(i, j)}}$ of $\withdiv(T, Y)$ with all the expected requirements for $t$-s, $p$-s and $q$-s. Then, their sum is $\dfrac{t_1 \cdot \prod_{(i, j) \in Y} q_{(i, j)} + t_2 \cdot \prod_{(i, j) \in Y} p_{(i, j)}}{\prod_{(i, j) \in Y} (p_{(i, j)} \cdot q_{(i, j)})}$. Each product
$p_{(i, j)} \cdot q_{(i, j)}$ is indeed a polynomial from $\F[a_i, a_j]$. The numerator lies in $T$,
because $T$ is a subring of $\F[[a_1, a_2, \ldots, a_k]]$ that includes all polynomials; in particular, $T$ includes $\prod_{(i, j) \in Y} q_{(i, j)}$ and $\prod_{(i, j) \in Y} p_{(i, j)}$.
\end{proof}

Now, suppose that we want to proceed with the same type of the argument as in the previous
paper. So, we intersect our GF(2)-grammar with a DFA for $a_1^* a_2^* \ldots a_k^*$,
obtain a system of linear equations with indeterminates that correspond to the nonterminals of type $a_1 \to a_k$. As we recall, the system can be written down in the form $Ax = f$, where all entries
of $A$ lie in the field $R_{a_1, a_k} = \withdiv(S_2 (a_1, a_k), \{(1, k)\})$ (we are abusing
notation a bit here, but, hopefully, it should be clear what we are trying to say). 

On the other
hand, the entries of $f$ are complicated. Each entry of $f$ corresponds either to a ``completely
final'' rule that reduces some nonterminal to a terminal symbol, or to a ``simplifying'' rule of type
$A_{a_1 \to a_k} \to B_{a_1 \to a_m} C_{a_m \to a_k}$ for some $m$. ``Simplifying'' rules
correspond to much more complicated languages compared to ``completely final'' rules. Therefore,
$f$ can be written as $f_2 + f_3 + \ldots + f_{k - 1}$, with each $f_m$ corresponding to a sum
of products that split into a part from $1$ to $m$ and a part from $m$ to $n$. 

But, as we already know, that $x = A^{-1} f = \sum_{i=2}^{k-1} A^{-1} f_i$. It is
important to note that we have a sum here, meaning that we split into independent
summands. By continuing this sort of process recursively in the smaller nonterminal, 
we will represent $x$ as a sum of elements $\withdiv(S_k, Y)$, where all possible $Y$
can be described by the following recursive process:
\begin{enumerate}
\item On each step, we work with a subsegment $[\ell, r]$ of $[1, k]$. In any case,
add the pair $(\ell, r)$ to $Y$.
\item If $r - \ell = 1$, terminate the current branch of the recursion.
\item If $r - \ell \geqslant 2$, pick some $m$ from $\ell + 1$ to $r - 1$.
\item Recursively handle the segments $[\ell, m]$ and $[m, r]$.
\end{enumerate}
\begin{defn} Let us say that a subset $Y$ of $X_k$ is \emph{tree-like},
if it can be obtained by the above process.
\end{defn}
\begin{rem} In general, it is completely unsurprising that stratified sets have appeared here. They are absolutely the same as in the famous paper by Ginsburg and Ullian~\cite{unamb-class}, though here we care
only about the maximal ones. 
\end{rem}
\begin{ex} For $k = 2$, only $\{(1, 2)\}$ is tree-like.
For $k = 3$, only $\{(1, 2), (1, 3), (2, 3)\}$.
For $k = 4$, there are two: $\{(1, 2), (1, 3), (2, 3), (3, 4), (1, 4)\}$ and $\{(1, 2), (2, 3), (2, 4), (3, 4), (1, 4)\}$.
\end{ex}

We have just proven the following theorem:
\begin{thm}\label{gf(2)_tree} If $K \subset a_1^* a_2^* \ldots a_k^*$ is described by a GF(2)-grammar,
then $K$ can be represented as a sum, with at most one summand corresponding 
to each tree-like subset $Y$ of $X_k$. The summand that corresponds to $Y$
must lie in $\withdiv(S_k, Y)$. 
\end{thm}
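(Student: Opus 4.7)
The plan is to follow the outline already sketched in the discussion preceding the statement and turn it into a clean induction on $r-\ell$, where $[\ell,r]$ is the current segment under consideration. The global case $[\ell,r] = [1,k]$ then yields the theorem.

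First I would fix a GF(2)-grammar describing $K$ and, as in Makarov~\cite{Makarov_DLT}, form the intersection with a DFA for $a_1^*a_2^*\cdots a_k^*$, classifying each resulting nonterminal by the pair of read/write positions $(\ell,r)$ it spans. For the top-level segment $[\ell,r]=[1,k]$ I would extract the subsystem $Ax=f$ whose unknowns $x$ correspond to nonterminals of type $a_1 \to a_k$. By the same reasoning as in the previous paper, the entries of $A$ belong to $\withdiv(S_2(a_1,a_k), \{(1,k)\})$, so $A^{-1}$ has entries in the same ring, which is a subring of $\withdiv(S_k, \{(1,k)\})$.

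Next I would decompose $f = \sum_{m=\ell+1}^{r-1} f_m$, where $f_m$ collects the ``simplifying'' rules $A_{a_\ell\to a_r} \to B_{a_\ell\to a_m} C_{a_m\to a_r}$ (together with the completely final rules, which I can absorb wherever convenient and which already live in $\F[a_\ell,a_r]\subset \withdiv(S_k,\{(\ell,r)\})$). Each coordinate of $f_m$ is then a sum of products $B_{[\ell,m]}\cdot C_{[m,r]}$, where each factor is one of the unknowns of the corresponding smaller subsystem. Applying linearity I obtain $x = \sum_m A^{-1} f_m$, so it suffices to show that each summand $A^{-1} f_m$ lies in $\withdiv(S_k, Y_1 \cup Y_2 \cup \{(\ell,r)\})$ for some $Y_1$ tree-like on $[\ell,m]$ and $Y_2$ tree-like on $[m,r]$; this exactly matches the recursive description of tree-like subsets.

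To justify this, I invoke the inductive hypothesis on the two strictly smaller segments $[\ell,m]$ and $[m,r]$: each $B_{[\ell,m]}$ decomposes as a sum of elements of $\withdiv(S_k,Y_1)$ over tree-like $Y_1\subseteq X_{[\ell,m]}$, and similarly for $C_{[m,r]}$. Since $S_k$ is a ring closed under multiplication and the denominators attached to $Y_1$ and $Y_2$ involve disjoint sets of pairs (their supports are contained in disjoint index ranges except at $m$), the product $B\cdot C$ lies in $\withdiv(S_k, Y_1\cup Y_2)$. Multiplying further by the entry of $A^{-1}$, which contributes a factor in $\withdiv(\,\cdot\,,\{(\ell,r)\})$, produces an element of $\withdiv(S_k, Y_1\cup Y_2\cup\{(\ell,r)\})$, as desired. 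In the base case $r-\ell = 1$ the nonterminal reduces directly to a terminal and the value is a polynomial in $\F[a_\ell,a_r]\subset \withdiv(S_k,\{(\ell,r)\})$.

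The part that needs the most care is the bookkeeping: verifying that summing over all choices of split points $m$ at every level of the recursion produces \emph{exactly} the sum indexed by tree-like subsets (with at most one summand per $Y$ after collecting like terms), rather than some coarser or finer family. I expect this to be essentially a tautology once the recursive definition of tree-like is written next to the recursive branching of the argument, but it is the step where the combinatorics of $Y$ must be lined up precisely with the algebra; the ring-theoretic closure properties established in the lemma above are what make the merging of denominators legitimate.
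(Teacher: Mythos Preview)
Your proposal is correct and follows essentially the same approach as the paper: the paper's proof is precisely the informal discussion preceding the theorem (setting up $Ax=f$, splitting $f=\sum_m f_m$ by split point, inverting $A$ over $R_{a_\ell,a_r}$, and recursing into the two smaller segments), and you have merely repackaged that discussion as an explicit induction on $r-\ell$. The only addition on your side is the careful check that products from $\withdiv(S_k,Y_1)$ and $\withdiv(S_k,Y_2)$ land in $\withdiv(S_k,Y_1\cup Y_2)$ and that the bijection between recursive choices of $m$ and tree-like $Y$ lets you collect into at most one summand per $Y$; the paper leaves both of these implicit.
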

\begin{rem}
In other words, $K$ lies in the $\F$-linear space generated
by $\withdiv(S_k, Y)$ with a tree-like $Y$.
\end{rem}

\section{A lower bound}

It turns out that the above upper bound is almost precise. To prove a corresponding 
lower bound, let us define an analogue of $\withdiv$ for invertible polynomials.
\begin{defn} Suppose that $T$ is some subring of $\F[[a_1, a_2, \ldots, a_k]]$ that contains
all polynomials. In other words, $\F[a_1, a_2, \ldots a_k] \subset T \subset \F[[a_1, a_2, \ldots, a_k]]$. Consider the set $X_k := \{(1, 2), (1, 3), \ldots, (1, k), (2, 3), \ldots, (2, k), \ldots, (k - 1, k)\}$ of all all pairs $(i, j)$, such that $1 \leqslant i < j \leqslant n$. Then, for every subset $Y$ of $X_k$, 
we can define a subset $\withinv(T, Y)$  of $\F[[a_1, a_2, \ldots, a_k]]$ in the following way. 
An element of $\F[[a_1, a_2, \ldots, a_k]]$ lies in $\withinv(T, Y)$ if and only if it can 
be represented as $\dfrac{t}{\prod_{(i, j) \in Y} p_{(i, j)}}$
for some $t \in T$ and all $p_{(i, j)}$-s being \emph{invertible} polynomials from $\F[a_i, a_j]$.
\end{defn}
\begin{lm} The set $\withinv(T, Y)$ is a ring.
\end{lm}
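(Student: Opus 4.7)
The plan is to follow the structure of the proof of the previous lemma almost verbatim, and check closure under the ring operations. The single new ingredient is the elementary fact that the set of invertible polynomials in $\F[a_i, a_j]$ (i.e.\ those with nonzero constant term) is itself closed under multiplication, since the product of two elements with nonzero constant term again has nonzero constant term. In particular, each intermediate denominator we form will still be a product of invertible polynomials from the correct variable sets.

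For closure under addition, take two elements
\[
\alpha = \frac{t_1}{\prod_{(i, j) \in Y} p_{(i, j)}}, \qquad \beta = \frac{t_2}{\prod_{(i, j) \in Y} q_{(i, j)}}
\]
of $\withinv(T, Y)$. Bring them to the common denominator $\prod_{(i, j) \in Y} (p_{(i, j)} q_{(i, j)})$ and write
\[
\alpha + \beta = \frac{t_1 \cdot \prod_{(i, j) \in Y} q_{(i, j)} + t_2 \cdot \prod_{(i, j) \in Y} p_{(i, j)}}{\prod_{(i, j) \in Y} (p_{(i, j)} q_{(i, j)})}.
\]
Each factor $p_{(i, j)} q_{(i, j)}$ is an invertible element of $\F[a_i, a_j]$ by the observation above. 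The numerator lies in $T$ because $T$ is a subring of $\F[[a_1, \ldots, a_k]]$ containing all polynomials: the products $\prod p_{(i,j)}$ and $\prod q_{(i,j)}$ are polynomials, and $t_1, t_2 \in T$, so the whole expression is in $T$.

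For closure under multiplication, $\alpha \beta = (t_1 t_2) / \prod_{(i,j) \in Y} (p_{(i,j)} q_{(i,j)})$; the numerator lies in $T$ because $T$ is a ring, and the denominator is again of the required form by the same multiplicativity argument. The multiplicative and additive identities $1$ and $0$ are trivially of the required form (take all $p_{(i,j)} = 1$). Since everything happens inside $\F[[a_1, \ldots, a_k]]$ and invertible polynomials have genuine power-series inverses, each fraction we manipulate represents a well-defined power series, so $\withinv(T, Y)$ is indeed a subring of $\F[[a_1, \ldots, a_k]]$.

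There is no real obstacle here: the only thing one might momentarily worry about is whether dividing by the product of several invertible polynomials could step outside of $\F[[a_1, \ldots, a_k]]$, but invertibility of each factor in its own polynomial ring immediately gives invertibility in the larger power series ring, so this concern is empty.
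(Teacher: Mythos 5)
Your proof is correct and takes exactly the same approach as the paper: the paper's own proof is literally ``the same as before, except use that the product of invertible polynomials is invertible,'' which is the single new ingredient you identify. You simply spell out the details that the paper leaves implicit.
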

\begin{proof} The proof is the same as before. Instead of using that the product of polynomials is a polynomial, we will use that the product of \emph{invertible} polynomials is an invertible polynomial.
\end{proof}

\begin{thm} Suppose that $K \subset a_1^* a_2^* \ldots a_k^*$ can be represented as
a sum of elements, each coming from some ring $\withinv(S_k, Y)$ with a tree-like $Y$.
Then, there is a GF(2)-grammar for $K$. 
\end{thm}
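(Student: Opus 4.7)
The plan is to reduce to a single element of $\withinv(S_k, Y)$ via closure under sum, and then to construct a GF(2)-grammar for that element by induction on the tree structure of $Y$.

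First I would observe that GF(2)-grammars are closed under $\F$-linear combinations of bounded languages: given two grammars, take the disjoint union of nonterminals and add a start rule $S \to S_1 \oplus S_2$. Hence it suffices to prove, for every subsegment $[\ell, r] \subseteq [1, k]$, every tree-like $Y' \subseteq X_{[\ell, r]}$, and every $f = t / \prod_{(u, v) \in Y'} p_{(u, v)}$ with $t$ in the analogous ring $S_{[\ell, r]}$ (the subring of $\F[[a_\ell, \ldots, a_r]]$ generated by the one-variable algebraic series $\A_\ell, \ldots, \A_r$) and each $p_{(u, v)} \in \F[a_u, a_v]$ invertible, that there exists a GF(2)-grammar with bounded language in $a_\ell^* \cdots a_r^*$ realizing $f$. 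I would induct on $r - \ell$.

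For the inductive step, $Y'$ splits at some $m \in (\ell, r)$ as $Y' = \{(\ell, r)\} \cup Y_L \cup Y_R$, with $Y_L, Y_R$ tree-like on $[\ell, m]$ and $[m, r]$. I would decompose each term $\prod_{j=\ell}^r A_{i, j}(a_j)$ of $t = \sum_i \prod_j A_{i, j}$ by attaching the boundary factor $A_{i, m}(a_m)$ to the left half, obtaining $t = \sum_i t_L^{(i)} t_R^{(i)}$ with $t_L^{(i)} \in S_{[\ell, m]}$ (involving $a_\ell, \ldots, a_m$) and $t_R^{(i)} \in S_{[m+1, r]} \subseteq S_{[m, r]}$ (involving $a_{m+1}, \ldots, a_r$). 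Writing $q_L = \prod_{(u, v) \in Y_L} p_{(u, v)}$ and $q_R = \prod_{(u, v) \in Y_R} p_{(u, v)}$ gives
\[
f \;=\; \frac{1}{p_{(\ell, r)}} \sum_i f_L^{(i)} f_R^{(i)}, \quad f_L^{(i)} = \frac{t_L^{(i)}}{q_L} \in \withinv(S_{[\ell, m]}, Y_L), \quad f_R^{(i)} = \frac{t_R^{(i)}}{q_R} \in \withinv(S_{[m, r]}, Y_R),
\]
and by induction each $f_L^{(i)}, f_R^{(i)}$ has a GF(2)-grammar with start symbol $S_L^{(i)}, S_R^{(i)}$ producing bounded languages in $a_\ell^* \cdots a_m^*$ and $a_m^* \cdots a_r^*$ respectively. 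I would then introduce a new start symbol $S$ governed by the rule
\[
S \;\to\; \bigoplus_i S_L^{(i)} S_R^{(i)} \;\oplus\; \bigoplus_{(\alpha, \beta)} a_\ell^\alpha \, S \, a_r^\beta,
\]
where the second XOR ranges over the monomials $a_\ell^\alpha a_r^\beta$ of $p_{(\ell, r)} - 1$, which has no constant term precisely because $p_{(\ell, r)}$ is invertible. The base case $r - \ell = 1$ is structurally identical, except the ``children'' $S_L^{(i)}, S_R^{(i)}$ are replaced by unambiguous context-free grammars for the single-variable algebraic series $A_i(a_\ell), B_i(a_r)$, whose existence is the Chomsky--Sch\"utzenberger theorem.

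Correctness then reduces to two verifications. Every production of $S$ yields a word in $a_\ell^* \cdots a_r^*$: the concatenation $S_L^{(i)} S_R^{(i)}$ merges safely at the shared $a_m^*$ block, and the ``sandwich'' rule $a_\ell^\alpha \, S \, a_r^\beta$ only extends the word at the outside. And the grammar's series equation rearranges, using $1 - (p_{(\ell, r)} - 1) = p_{(\ell, r)}$ in $\F$, to $p_{(\ell, r)} \cdot S = \sum_i f_L^{(i)} f_R^{(i)}$, whose unique formal power series solution is $S = f$. I expect the main technical obstacle to be the careful bookkeeping around the shared variable $a_m$ in the algebraic decomposition $t = \sum_i t_L^{(i)} t_R^{(i)}$, together with the verification that the sandwich rule has a well-defined iterative solution (which relies on every monomial of $p_{(\ell, r)} - 1$ having positive total degree). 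The asymmetry with the upper bound---arbitrary polynomial denominators there, invertible ones only here---is exactly what this uniqueness-of-solution argument forces.
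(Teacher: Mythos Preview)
Your proposal is correct and follows essentially the same route as the paper: reduce to a single summand by closure under symmetric difference, induct along the tree structure of $Y$, split at the distinguished node $m$ into a left and a right factor handled by the inductive hypothesis, combine them by GF(2)-concatenation, and finally realize division by the invertible outer polynomial $p_{(\ell,r)}$ via the ``sandwich'' rule $S' \to a_\ell^\alpha S' a_r^\beta$ ranging over the monomials of $p_{(\ell,r)}+1$. The only cosmetic difference is that the paper first reduces the numerator to a single product $A_1\cdots A_k$ before inducting, whereas you carry the sum $\sum_i$ through the induction and absorb it into the start rule; both are equally valid.
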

\begin{proof} By closure under symmetric difference, it is enough to prove this theorem when
$K$ can be represented as $\dfrac{A_1 A_2 \ldots A_k}{\prod_{(i, j) \in Y} p_{(i, j)}}$ with a tree-like
$Y$, $A_i \in \A_i$, $p_{(i, j)} \in \F[a_i, a_j]$. 

Let us prove that via induction over $k$.
For the case of $k = 2$, the theorem has
already been proven in the previous paper~\cite[Theorem 8]{Makarov_DLT}. 
Now, suppose that $k \geqslant 3$. Then, because $Y$ is a tree-like set, there exists such $m$, that it splits the segment $[1, k]$ in the recursive definition of the tree-like sets. Then, 
$\dfrac{A_1 A_2 \ldots A_k}{\prod_{(i, j) \in Y} p_{(i, j)}}$ splits into $1/p_{(1,k)}$ and
a product of $\dfrac{A_1 A_2 \ldots A_m}{\prod_{(i, j) \in Y_\ell} p_{(i, j)}}$ and $\dfrac{A_{m + 1} \ldots A_k}{\prod_{(i, j) \in Y_r} p_{(i, j)}}$, where $Y_\ell \cup Y_r \cup \{(1,k)\} = Y$, all segments in $Y_\ell$ are subsegments of $[1, m]$ and all segments in $Y_r$ are subsegments of $[m, k]$. The former (respectively, the latter) of those two partial products is a subset of $a_1^* a_2^* \ldots a_m^*$ (respectively, $a_m^* a_{m+1}^* \ldots a_k^*$) that can be represented by a GF(2)-grammar. Hence, their product can be also represented by a GF(2)-grammar. Finally, 
we need to multiply said product by $1/p_{(1, k)}$. This can be done by creating a new starting
nonterminal $S'$ with rules $S' \to S$, $S' \to a_1^u S' a_k^ v$ for all monomials $a_1^u a_k^v$ in the non-invertible polynomial $p_{(1, k)} + 1$.
\end{proof}

\section{The linear case}

In the case of linear GF(2)-grammar, we can use the same line of reasoning, but
the final results will differ a bit because of the following two factors.
\begin{enumerate}
\item In the general case, we have a ``simplifying'' rule $A_{1 \to k} \to B_{1 \to m} C_{m \to k}$
that makes us ``recurse'' to both a nonterminal of type $1 \to m$ and to a nonterminal of type $m \to k$. In a sense, we have to make two ``recursive calls''. On the other hand, in the linear case, there is only one nonterminal on the right-hand side of each rule. Therefore, we will make
only one recursive call, replacing the ``tree structure'' observed in Theorem~\ref{gf(2)_tree} with 
a simpler ``path structure''. 
\item When we solve a linear system in the proof of Theorem~\ref{gf(2)_tree}, the coefficients
of the system are elements of the field $R_{a_1, a_k}$. However, in the linear case, the
coefficients of the system are from the ring $\F[a_1, a_k]$ of polynomials. Because
the latter ring is much simpler than the former field, we can easily prove a more precise
upper bound result. In the end, we obtain a \emph{completely precise} characterization 
for the subsets of $a_1^* a_2^* \ldots a_k^*$ described by linear GF(2)-grammars,
with the resulting upper and lower bounds matching exactly. 
\end{enumerate}

To make the idea of having 	``a path structure'' more precise, we will define the notion of
a \emph{path-like} set.
\begin{defn} Let us say that $Y \subset X_k$ is a \emph{path-like set}, if it can
be obtained by the following recursive process:
\begin{enumerate}
\item On each step, we work with a subsegment $[\ell, r]$ of $[1, k]$. In any case,
add the pair $(\ell, r)$ to $Y$.
\item If $r - \ell = 1$, terminate the current branch of the recursion.
\item If $r - \ell \geqslant 2$, recursively handle the segments $[\ell, r - 1]$ and $[\ell + 1, r]$.
\end{enumerate}
\end{defn}

\begin{thm}\label{linear_path} If $K \subset a_1^* a_2^* \ldots a_k^*$ is described by a linear GF(2)-grammar,
then $K$ can be represented as a sum, with at most one summand corresponding to each
path-like subset $Y$ of $X_k$. The summand that corresponds to $Y$ must
lie in $\withinv(\F[a_1, a_2, \ldots, a_k], Y)$. 
\end{thm}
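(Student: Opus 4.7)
The plan is to mirror the proof of Theorem~\ref{gf(2)_tree} while exploiting the two simplifications highlighted in the preceding discussion: each linear rule carries only one nonterminal on its right-hand side, and the coefficients of the induced linear system live in the polynomial ring $\F[a_\ell, a_r]$ rather than in the field $R_{a_\ell, a_r}$. After intersecting the grammar with the DFA for $a_1^* a_2^* \cdots a_k^*$ every nonterminal acquires a type $(i, j)$ with $i \leqslant j$, and a preliminary normal-form reduction (eliminating same-type rules with both terminal frames empty) lets me assume that every same-type linear rule carries a nonzero monomial frame. The argument then proceeds by induction on the segment length $r - \ell$. For the base case $\ell = r$ I use that a linear GF(2)-grammar over a single letter describes a rational series of the form $p(a_\ell)/q(a_\ell)$ with $q(0) = 1$; such a series is absorbable into $\withinv(\F[a_\ell, a_s], \{(\ell, s)\})$ for any index $s$ because $q(a_\ell)$ is still an invertible polynomial when viewed inside $\F[a_\ell, a_s]$.

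For the inductive step on a segment $[\ell, r]$ I would assemble the equations for the vector of type-$(\ell, r)$ unknowns in the form $(I - M) x = f$. The matrix $M$ gathers the same-type linear rules $A_{\ell \to r} \to w_1 B_{\ell \to r} w_2$, whose terminal frames must satisfy $w_1 \in \F[a_\ell]$ and $w_2 \in \F[a_r]$ because they have to fix the DFA states $\ell$ and $r$; hence every entry of $M$ lies in $\F[a_\ell, a_r]$ and, thanks to the normal-form step, has zero constant term. Consequently $\det(I - M) \in \F[a_\ell, a_r]$ has constant term $1$ and is invertible as a power series, so every entry of $(I - M)^{-1}$ lies in $\withinv(\F[a_\ell, a_r], \{(\ell, r)\})$, which already supplies the divisor for the distinguished pair $(\ell, r)$ demanded by the path-like set $Y_{\ell, r}$ of $[\ell, r]$.

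The column $f$ is a finite sum of monomials coming from fully terminal rules, which sit in $\F[a_\ell, \ldots, a_r]$, and of contributions $w_1 \cdot X_{B, \ell' \to r'} \cdot w_2$ arising from linear rules descending into a strictly smaller type $(\ell', r')$, where $w_1 \in \F[a_\ell, \ldots, a_{\ell'}]$ and $w_2 \in \F[a_{r'}, \ldots, a_r]$. By the inductive hypothesis each $X_{B, \ell' \to r'}$ lies in $\withinv(\F[a_{\ell'}, \ldots, a_{r'}], Y_{\ell', r'})$; multiplying by the monomial frames and then by $(I - M)^{-1}$ produces an element of $\withinv(\F[a_\ell, \ldots, a_r], \{(\ell, r)\} \cup Y_{\ell', r'})$, which is contained in $\withinv(\F[a_\ell, \ldots, a_r], Y_{\ell, r})$ because $Y_{\ell, r}$ already contains every pair lying inside $[\ell, r]$. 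Summing the finitely many contributions closes the induction, and specialising at $(\ell, r) = (1, k)$ yields the theorem. I expect the main obstacle to be the preparatory bookkeeping around the normal form, namely verifying carefully that the elimination of unit-style same-type productions really does remove every constant term from $M$, together with cleanly merging the rational one-variable base case into the multivariate $\withinv$; the containment chasing afterwards is essentially routine.
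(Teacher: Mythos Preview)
Your proposal is correct and shares the same skeleton as the paper's proof: write a linear system for the outermost-type nonterminals, note that the coefficient matrix has entries in $\F[a_\ell,a_r]$ with an invertible determinant so that its inverse lives in $\withinv(\F[a_\ell,a_r],\{(\ell,r)\})$, and then recurse on the right-hand side $f$.

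The one genuine difference is the automaton you intersect with. The paper uses a \emph{three}-state DFA for the language $a_1^*\{a_2,\ldots,a_{k-1}\}^* a_k^*$, so after intersection every ``simplifying'' rule out of type $1\!\to\! k$ lands either in a sublanguage of $a_1^*\cdots a_{k-1}^*$ or of $a_2^*\cdots a_k^*$; this makes the recursion match the path-like definition $[\ell,r]\mapsto[\ell,r-1],[\ell+1,r]$ step for step, with exactly two branches at each level. You instead intersect with the full $k$-state DFA, which lets a single rule jump to an arbitrary sub-segment $[\ell',r']\subset[\ell,r]$; you then compensate by observing that the (unique) path-like set $Y_{\ell,r}$ in fact contains \emph{every} pair inside $[\ell,r]$, so any $\withinv(\cdot,\{(\ell,r)\}\cup Y_{\ell',r'})$ is already contained in $\withinv(\cdot,Y_{\ell,r})$. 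Both routes reach the same conclusion; the paper's buys a cleaner match with the stated recursion, while yours avoids the repeated re-intersection at each level at the cost of a slightly messier base case (the one-variable types $\ell=r$, where you have to park the univariate denominator $q(a_\ell)$ in some two-variable slot). Your remark that this base-case bookkeeping is the main thing to watch is accurate, and the fix you sketch---viewing $q(a_\ell)$ as an invertible element of $\F[a_\ell,a_s]$ for a neighbouring index $s$---is exactly what is needed.
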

\begin{rem} Notice that, unlike $\withdiv$ in the statement of Theorem~\ref{gf(2)_tree}, we have
$\withinv$ here, meaning that only invertible polynomials can appear in the denominators.
The are two other differences: a smaller ring $\F[a_1, a_2, \ldots, a_k]$ (instead of $S_k$)
``in the numerator'' and a different requirement ``for the denominators'' (the sets $Y$
have to be path-like and not tree-like).
\end{rem}
\begin{proof} Consider a linear GF(2)-grammar $G$ for the language $K$. Let us assume, 
without loss of generality, that $G$ is in the following normal form: each rule is either
$A \to \varepsilon$, $A \to c$, $A \to cB$ or $A \to Bc$. Then, let us formally intersect
$G$ with an automaton that accepts the language $a_1^* \{a_2, a_3, \ldots, a_{k-1}\}^* a_k^*$. Notice that this language is larger than $a_1^* a_2^* \ldots a_k^*$, but can be recognized
by a DFA with only three states $1$, $(2,k-1)$ and $k$.

Now, in the resulting GF(2)-grammar, we have rules $A_{1 \to k} \to a_1 B_{1 \to k}$, $A_{1 \to k} \to B_{1 \to k} a_k$, some ``completely final'' rules $A_{1 \to k} \to a_k$ and some ``simplifying'' rules that can only generate GF(2)-linear subsets of either $a_1^* a_2^* \ldots a_{k-1}^*$ or
$a_2^* \ldots a_k^*$. 
 
Similarly to the general case, we can
write down a linear system for all nonterminals of type $1 \to k$ of the resulting grammar. 
In the end, we will get a system $Ax = f$, where $f$ can be represented as a sum of two
summands $f_{1 \to k - 1}$ and $f_{2 \to k}$, each corresponding to a GF(2)-linear
grammar generating a subset of  $a_1^* a_2^* \ldots a_{k-1}^*$  or $a_2^* \ldots a_k^*$ respectively (``completely final'' rules can be ``absorbed'' into $f_{2 \to k}$). 

How do the entries of $A$ look like? Similarly to the general case, all the diagonal entries
are invertible polynomials and all other entries are not invertible. Hence, the determinant
of $A$ is an invertible polynomial. Therefore, by either Kramer's rule or the adjugate matrix
theorem, the coefficients of $A^{-1}$ lie in $\withinv(\F[a_1, a_k], \{(1, k)\})$. Or, in simpler terms, they are rational functions with invertible denominators. 

Finally, consider the equation $x = A^{-1} f = A^{-1} f_{1 \to {k - 1}} + A^{-1} f_{2 \to k}$.
Each of these two summands corresponds to one of the two recursive calls in the definition
of path-like subsets. The polynomials in the denominator will be invertible, because
the polynomials in the denominators of entries of $A^{-1}$ are invertible. Unlike the general
case, the numerators of entries of $A^{-1}$ are simple polynomials. In the end, these
three observations complete the proof.
\end{proof}

Now let us prove the converse result.
\begin{thm} Suppose that $K \subset a_1^* a_2^* \ldots a_k^*$ can be represented as
a sum of elements, each coming from some ring $\withinv(\F[a_1, a_2, \ldots, a_k], Y)$ with a path-like $Y$.
Then, there is a linear GF(2)-grammar for $K$.  
\end{thm}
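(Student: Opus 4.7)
I would mirror the converse for the tree-like case (Theorem~3.1), keeping the resulting grammar linear throughout.

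First, closure of linear GF(2)-grammars under symmetric difference reduces the problem to a single summand $\frac{t}{\prod_{(i,j) \in Y} p_{(i,j)}}$. A glance at the recursive definition of ``path-like'' shows it is deterministic, so $Y = X_k$ is the unique path-like subset of $X_k$ for each $k$, meaning the sum in the hypothesis really has at most one term. A further appeal to symmetric difference, splitting $t$ into its monomials, lets me assume $t = a_1^{e_1} \cdots a_k^{e_k}$.

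I would then induct on $k$. The base case $k = 2$ follows from Theorem~8 of~\cite{Makarov_DLT}; directly, the linear grammar with ``inflation'' rules $S \to a_1^u S a_2^v$ for each monomial $a_1^u a_2^v$ of $p_{(1,2)} + 1$ plus a terminating rule encoding $a_1^{e_1} a_2^{e_2}$ satisfies $S = a_1^{e_1} a_2^{e_2} + (p_{(1,2)} + 1) S$, so $S \cdot p_{(1,2)} = a_1^{e_1} a_2^{e_2}$ and $S = K$ (after conversion to the normal form via intermediate nonterminals). For the inductive step, the outermost factor $p_{(1, k)}$ is always compatible with the alphabet order and can be absorbed by rules $S \to a_1^u S a_k^v$, reducing the task to building a linear grammar for the inner fraction $\frac{t}{\prod_{(i, j) \in X_k \setminus \{(1, k)\}} p_{(i, j)}}$.

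The main obstacle is this inner construction. Unlike in the tree-like case, we cannot multiply two smaller grammars for the subsegments $[1, k-1]$ and $[2, k]$, since linear GF(2)-grammars are not closed under concatenation --- indeed this non-closure is announced in the abstract and established in the paper's final section. A naive rule $S \to a_i^u S a_j^v$ for a middle pair $(i, j) \neq (1, k)$ also fails, as it would prepend $a_i$ before the $a_1$'s and append $a_j$ after the $a_k$'s, violating the sort order. My resolution is to set up the grammar with one nonterminal $S_{[\ell, r]}$ per subsegment visited by the path-like recursion, whose language lies in $a_\ell^* a_{\ell+1}^* \cdots a_r^*$ and whose rules only ever prepend $a_\ell$ or append $a_r$: inflation rules $S_{[\ell, r]} \to a_\ell^u S_{[\ell, r]} a_r^v$ for monomials of $p_{(\ell, r)} + 1$, transition rules $S_{[\ell, r]} \to a_\ell^u S_{[\ell', r']} a_r^v$ into sub-segments $[\ell', r'] \subsetneq [\ell, r]$ with freely chosen polynomial coefficients in $\F[a_\ell, a_r]$, and terminating polynomial rules in $\F[a_\ell, \ldots, a_r]$.

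Writing down the induced linear system and solving for $S_{[1, k]}$ then essentially reverses the argument of Theorem~\ref{linear_path}. The remaining technical content --- the heart of the proof --- is to show that the free polynomials in the transition and terminating rules can always be chosen so that $S_{[1, k]}$ matches the prescribed $K$. This amounts to a polynomial decomposition identity in $\withinv(\F[a_1, \ldots, a_k], X_k)$, whose solvability I expect to follow by combining the inductive hypothesis (supplying grammars for the smaller rings $\withinv(\F[a_1, \ldots, a_{k-1}], X_{k-1})$ and $\withinv(\F[a_2, \ldots, a_k], X_{k-1})$) with the large freedom in choosing the transition polynomials; if that freedom proves insufficient, one can introduce several copies of each $S_{[\ell, r]}$ with distinct denominator polynomials to enlarge the design space. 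Verifying this decomposition in full generality is, I expect, the most laborious step.
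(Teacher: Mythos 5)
Your diagnosis of where the difficulty sits is accurate, but the step you defer to the end --- choosing the transition and terminating polynomials so that $S_{[1,k]}$ equals an arbitrary prescribed element of $\withinv(\F[a_1,\ldots,a_k],X_k)$ --- is not merely laborious: it is impossible, and no enlargement of the design space will fix it. Whatever polynomials you pick, solving your system for $S_{[1,k]}$ produces an expression of the form $\bigl(\sum_i u_i(a_1,a_k)G_i+\sum_j v_j(a_1,a_k)H_j\bigr)/p_{(1,k)}(a_1,a_k)$ with each $G_i$ supported on $a_1^*\cdots a_{k-1}^*$ and each $H_j$ on $a_2^*\cdots a_k^*$; this is exactly the shape that the upper-bound argument of Theorem~\ref{linear_path} extracts from \emph{every} linear GF(2)-grammar, so adding further copies of the nonterminals cannot escape it. That shape does not exhaust $\withinv(\F[a_1,\ldots,a_k],X_k)$: already for $k=3$ the series $\frac{1}{(1+ab)(1+bc)}=\sum_{n,m}a^nb^{n+m}c^m$ lies in $\withinv(\F[a,b,c],X_3)$ (take $t=1$, $p_{(1,2)}=1+ab$, $p_{(2,3)}=1+bc$, $p_{(1,3)}=1$, all invertible), yet the paper's final theorem proves by the ``asymptotic separation'' argument that $\set{a^nb^{n+m}c^m}{n,m\geqslant 0}$ is described by no linear GF(2)-grammar. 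So the decomposition identity you were hoping to verify genuinely fails, and the statement as literally written is in tension with the paper's own Section 5.

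For what it is worth, the paper's own proof has the same skeleton as yours (induction on $k$, inflation rules $S'\to a_1^uS'a_k^v$ for the monomials of $p_{(1,k)}+1$, then splitting the numerator by the powers of $a_1$ and $a_k$) and it glosses over precisely the point you flagged: after fixing those powers one is left with $t'(a_2,\ldots,a_{k-1})/\prod_{(i,j)\neq(1,k)}p_{(i,j)}$, whose denominator still couples $a_1$ with $a_2$ and $a_{k-1}$ with $a_k$, so there is no $(k-1)$-letter instance for the inductive hypothesis to apply to. A converse that is actually provable (and genuinely matches what the proof of Theorem~\ref{linear_path} establishes) must restrict the numerators recursively: the admissible series for a segment $[\ell,r]$ are $\bigl(u\cdot K_{[\ell,r-1]}+v\cdot K_{[\ell+1,r]}\bigr)/p_{(\ell,r)}$ with $u,v\in\F[a_\ell,a_r]$, $p_{(\ell,r)}$ invertible, and $K_{[\ell,r-1]}$, $K_{[\ell+1,r]}$ recursively admissible. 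For that class your grammar --- rules $S_{[\ell,r]}\to a_\ell^uS_{[\ell,r]}a_r^v$, $S_{[\ell,r]}\to a_\ell^uS_{[\ell,r-1]}a_r^v$ and $S_{[\ell,r]}\to a_\ell^uS_{[\ell+1,r]}a_r^v$ --- works verbatim and the induction closes.
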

\begin{proof} It is enough to prove this fact for each summand independently. Suppose that
we have a summand $K_Y$ that corresponds to a path-like set $Y$. We will
proceed by induction over $k$. 

Firstly, let us prove the statement for $k = 2$.
of $K_Y$ corresponds to some finite language. Suppose that it is generated by a grammar with
a starting nonterminal $S$. We need to ``multiply'' this language by $1/p_{(1, k)})$ for some
invertible polynomial $p_{(1, k)}$. This can be done by creating a new starting
nonterminal $S'$ with rules $S' \to S$, $S' \to a_1^u S' a_k^ v$ for all monomials $a_1^u a_k^v$ in the non-invertible polynomial $p_{(1, k)} + 1$.

The induction step is similar, we just have to take some measures to deal with the letters
$a_1$ and $a_k$ in the numerator. To deal with them, simply our summand into several
summands with the fixed powers of $a_1$ and $a_k$ in the numerator. For each of them,
construct the linear GF(2)-grammar inductively and add a new starting nonterminal $S'$
with a rule $S' \to a_1^\ell S a_k^m$ for the power $a_1^* a_k^*$. 
\end{proof}

\section{GF(2)-concatenation of GF(2)-linear languages}

The original paper about GF(2)-grammars~\cite{Makarov_DLT} claims that the family of languages 
described by \emph{linear} GF(2)-grammars is not closed under GF(2)-concatenation
and suggests the following proof plan. 

\begin{enumerate}
\item Find two unambiguous linear languages $L_1$
and $L_2$, such that their GF(2)-concatenation $L_1 \odot L_2$ seems to be too
``complicated'' to be described by a linear \emph{conjunctive} grammar.
\item Prove that there is no linear conjunctive grammar for $L_1 \odot L_2$ by
using the method of Terrier~\cite{Terrier}.
\item  Apply the fact that linear conjunctive grammars are strictly more powerful
than linear GF(2)-grammars and the fact that linear GF(2)-grammars are
strictly more powerful than unambiguous linear grammars~\cite{gf2}.
\end{enumerate}

If carried out, this plan will lead to the proof of the following conjecture:
\begin{conjecture} There are two unambiguous linear languages $L_1$ and $L_2$,
such that their GF(2)-concatenation $L_1 \odot L_2$ is not described by any linear 
conjunctive grammar.
\end{conjecture}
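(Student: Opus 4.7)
The plan is to follow the three-step outline already set out in the text, using the path-like characterisation from Theorem~\ref{linear_path} as a diagnostic tool in step one and doing the bulk of the genuine work in step two.

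For step one I would search among \emph{bounded} unambiguous linear candidates $L_1, L_2 \subset a_1^* a_2^* a_3^*$, because then the portion $L_1 \odot L_2 \cap a_1^* a_2^* a_3^*$ falls within the scope of Theorem~\ref{linear_path} and its generating series can be computed by hand. The goal is to obtain a series whose denominator, in reduced form, contains a factor that cannot be written as a product of invertible bivariate polynomials — for example, an irreducible factor of the shape $1 + a_1 a_2 a_3$ — since such a factor would violate the $\withinv$ requirement and thereby rule out any linear GF(2)-grammar for $L_1 \odot L_2$. A natural starting point is to take ``twin'' languages such as $L_1 = \set{a_1^n a_2^n a_3^m}{n, m \geqslant 0}$ paired with various unambiguous linear $L_2$, but simple choices like $L_2 = \set{a_1^m a_2^n a_3^n}{n, m \geqslant 0}$ turn out to yield a bounded residue that still factors correctly; so one probably needs to introduce asymmetry (e.g.\ coefficients like $a_i^{2n}$, or a counting balance that mixes all three letters) to force the offending trivariate factor. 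Already at this point, a positive outcome delivers the weaker non-closure-of-linear-GF(2) claim from~\cite{Makarov_DLT}.

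Step two is the substantive obstacle. To lift ``not linear GF(2)'' to ``not linear conjunctive'' we need a genuine communication-theoretic lower bound in the style of Terrier~\cite{Terrier}. The strategy is to read off from the chosen $L_1 \odot L_2$ a parity-of-splits predicate on exponent tuples and to exhibit inside it a ``hard rectangle'' that Terrier's counting lemma forbids for trellis automata: a family of input pairs whose correct responses would force too much information to cross a fixed cut of the trellis. The main difficulty is alignment, since our predicate is most naturally expressed in the commutative ring of Theorem~\ref{linear_path} whereas Terrier's framework lives in a positional ``word'' model; a careful insertion of separator symbols into $L_1, L_2$ may be needed so that the required hard rectangle becomes visible in the word structure rather than hidden inside the commutative image.

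Step three is then routine: linear conjunctive grammars strictly contain linear GF(2)-grammars and unambiguous linear grammars are contained in linear GF(2)-grammars~\cite{gf2}, so the chosen $L_1, L_2$ belong to the class promised by the conjecture and the non-linear-conjunctive conclusion of step two is precisely the statement we want. In summary, I expect step one to be amenable to direct computation once the right candidate pair is identified, whereas step two — the Terrier-style lower bound — is where the bulk of the new combinatorial work lies.
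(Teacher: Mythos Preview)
The paper does not prove this conjecture; it explicitly states that the three-step plan ``has not been carried out anywhere in the literature yet'' and leaves the conjecture open. What the paper \emph{does} prove, in the theorem immediately following, is the strictly weaker non-closure of linear GF(2)-grammars under GF(2)-concatenation, via the concrete pair $L_1 = \{a^n b^n : n \geqslant 0\}$ and $L_2 = \{b^m c^m : m \geqslant 0\}$. Crucially, the paper then remarks that $L_1 \odot L_2 = \{a^n b^{n+m} c^m\}$ \emph{is} described by a linear conjunctive grammar --- it is a textbook trellis-automaton example --- so this pair cannot witness the conjecture. There is therefore no paper proof for you to be compared against.

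Your proposal is accordingly a research outline for an open problem, not a proof, and the gap is exactly where you locate it: step two is entirely unexecuted. A Terrier-style lower bound against trellis automata is a genuine piece of new combinatorics, and nothing in Theorem~\ref{linear_path} or its surrounding machinery helps there, since that theorem constrains only linear GF(2)-grammars, a strictly smaller class. Your step one is also somewhat misdirected: looking for an irreducible trivariate factor such as $1 + a_1 a_2 a_3$ in ``the denominator'' ignores that Theorem~\ref{linear_path} produces a \emph{sum} over several path-like $Y$, not a single rational expression with one denominator to inspect; and in any case the paper already supplies a clean witness for step one (the language $\{a^n b^{n+m} c^m\}$), but that very witness is known to fail step two. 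So you would need a different candidate pair \emph{and} a full Terrier argument for it --- neither of which is provided.
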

However, according to the best of authors' knowledge, the plan has not been carried
out anywhere in the literature yet. We suggest a much simpler proof of the original
statement by using Theorem~\ref{linear_path}. 

\begin{thm} Consider unambiguous linear languages $L_1 := \set{a^n b^n}{n \geqslant 0}$ and $L_2 := \set{b^m c^m}{m \geqslant 0}$. Their GF(2)-concatenation $L_1 \odot L_2 = \set{a^n b^{n + m} c^m}{n \geqslant 0 \text { and } m \geqslant 0}$ is not described by any linear GF(2)-grammar.
\end{thm}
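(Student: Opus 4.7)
The plan is to apply Theorem~\ref{linear_path} with $k = 3$ and derive a contradiction. A direct calculation gives
$$\Dual(L_1 \odot L_2) = \sum_{n, m \geqslant 0} a^n b^{n + m} c^m = \frac{1}{(1 + ab)(1 + bc)}.$$
The path-like subsets of $X_3$ are obtained by starting with $[1, 3]$ (which adds the pair $(1, 3)$) and then recursing into either $[1, 2]$ or $[2, 3]$, yielding exactly two, namely $Y_L = \{(1, 2), (1, 3)\}$ and $Y_R = \{(1, 3), (2, 3)\}$. Assuming that $L_1 \odot L_2$ is described by a linear GF(2)-grammar, Theorem~\ref{linear_path} would produce a decomposition
$$\frac{1}{(1 + ab)(1 + bc)} = \frac{t_L(a, b, c)}{p(a, b) \, q(a, c)} + \frac{t_R(a, b, c)}{r(b, c) \, q'(a, c)},$$
where $t_L, t_R \in \F[a, b, c]$ and $p, q, q', r$ are invertible polynomials in the indicated pairs of variables.

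To obtain a contradiction I would analyze the $(1+ab)$-adic valuation in $\F(a, b, c)$. Since $r, q, q'$ each omit at least one of $a, b$, none of them is divisible by $1 + ab$; thus the valuation of the second summand is nonnegative. Because the left-hand side has $(1+ab)$-valuation equal to $-1$, the first summand must have valuation exactly $-1$, so $1 + ab$ divides $p$ and its multiplicity in $p$ exceeds that in $t_L$ by exactly one. Writing $p = (1+ab)^n \tilde p$ and $t_L = (1+ab)^{n-1} \tilde t$ with $1 + ab$ dividing neither $\tilde p$ nor $\tilde t$, the product $(1+ab)\, T_L$ simplifies to $\tilde t / (\tilde p\, q)$, which is regular at $b = a^{-1}$.

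Multiplying the identity by $1 + ab$ and substituting $b = a^{-1}$ then yields
$$\frac{a}{a + c} = \frac{\tilde t(a, a^{-1}, c)}{\tilde p(a, a^{-1}) \, q(a, c)}$$
in $\F(a, c)$, since $(1 + ab)\, T_R$ vanishes at $b = a^{-1}$ (the factor $1 + ab$ becomes zero while $T_R$ itself remains regular there). Cross-multiplying inside the UFD $\F[a, a^{-1}, c]$, the irreducible element $a + c$ must divide $a \cdot \tilde p(a, a^{-1}) \cdot q(a, c)$; since $a$ is a unit and $\tilde p(a, a^{-1}) \in \F[a, a^{-1}]$ is nonzero and independent of $c$, the factor $a + c$ must divide $q(a, c)$. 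But then $q(0, 0) = 0$, contradicting the invertibility of $q$.

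The main obstacle is the valuation bookkeeping in the second paragraph, which matches powers of $1 + ab$ in $p$ and $t_L$ to guarantee that $(1+ab)\, T_L$ is both regular and nonzero at $b = a^{-1}$ while still retaining $q(a, c)$ as a factor of its denominator. Once this is handled, the contradiction boils down to the familiar fact that an invertible polynomial cannot have $a + c$ as a factor.
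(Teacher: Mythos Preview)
Your argument is correct but follows a genuinely different route from the paper's. Both proofs invoke Theorem~\ref{linear_path} with $k=3$ and exploit that the only denominator depending on both $a$ and $c$ must be an \emph{invertible} polynomial in $\F[a,c]$, but they extract the contradiction in unrelated ways. The paper uses only a weakened consequence of Theorem~\ref{linear_path}: it writes the series as $\bigl(\sum_i p_i G_i(a,b)+\sum_j r_j H_j(b,c)\bigr)/Q(a,c)$ with $G_i\in\F[[a,b]]$, $H_j\in\F[[b,c]]$ arbitrary and $Q$ invertible, multiplies through by $Q$, and compares the coefficient of $b^k$ on both sides. On the left, every monomial $a^\ell b^k c^m$ has $\min(\ell,m)$ bounded by the degrees of the $p_i,r_j$; on the right the coefficient of $b^k$ is $(a^k+a^{k-1}c+\cdots+c^k)\,Q(a,c)$, and because $Q$ has constant term $1$ the degree-$k$ part $a^k+\cdots+c^k$ survives uncancelled, producing the monomial $a^{M+1}c^{M+1}$ for $k=2M+2$. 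Your proof instead keeps the full rational form of Theorem~\ref{linear_path}, works with the $(1+ab)$-adic valuation on $\F(a,b,c)$, and after reducing modulo $1+ab$ forces the irreducible $a+c$ to divide the invertible polynomial $q(a,c)$. The paper's method is more elementary, pure coefficient bookkeeping with no localization; yours is more structural and would adapt immediately to other ``forbidden'' denominators by changing the prime at which one localizes.
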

\begin{rem} Notice that $L_1 \odot L_2 = \set{a^n b^{n + m} c^m}{n \geqslant 0 \text { and } m \geqslant 0}$ \emph{can} be described by a linear conjunctive grammar (or, equivalently, by a trellis automaton). In fact, it is one of the most well-known examples that show the power of trellis automata. 
\end{rem}
\begin{proof} Proof by contradiction. Suppose that $L_1 \odot L_2$ can be described by
a linear GF(2)-grammar. Then, by Theorem~\ref{linear_path}, the formal power series
$\sum_{n \geqslant 0, m \geqslant 0} a^n b^{n + m} c^m$
can be represented as 
\begin{equation}\label{f_lang}
\sum\limits_{n \geqslant 0, m \geqslant 0} a^n b^{n + m} c^m = \dfrac{\sum\limits_{i=1}^t p_i (a, b, c) G_i (a, b) + \sum\limits_{j = 1}^s r_j (a, b, c) H_j (b, c)}{1 + aq_a(a, c) + cq_c(a, c)}
\end{equation}
for some polynomials $p_i, r_j \in \F[a, b, c]$, some power series $G_i \in \F[[a, b]]$ and $H_j \in \F[[b, c]]$ and an invertible polynomial $1 + aq_a(a, c) + cq_c(a, c) \in \F[a, c]$. We are not even using the full strength of Theorem~\ref{linear_path} here, because we will not need any additional information about $G_i$ and $H_j$ in the rest of the argument. 

We can rewrite Equation~\eqref{f_lang} in the following way:
\begin{align*}
\sum_{i=1}^t p_i (a, b, c) G_i (a, b) + \sum_{j = 1}^s r_j (a, b, c) H_j (b, c) 
&= \\ (1 + ap_a(a, c) + cp_c(a, c)) \cdot 
\sum_{n \geqslant 0, m \geqslant 0} a^n b^{n + m} c^m
&= \\ (1 + ap_a(a, c) + cp_c(a, c)) \cdot \sum_{k=0}^{+\infty} b^k (a^k + a^{k-1}c + \ldots + ac^{k-1} + c^k) &= \sum_{k=0}^{+\infty} b^k d_k,
\end{align*}
where $d_k :=  (a^k + a^{k-1}c + \ldots + ac^{k-1} + c^k) \cdot  (1 + aq_a(a, c) + cq_c(a, c))$.

Now, consider the left-hand and the right-hand sides of this equation as a power series in variable $b$. By the right-hand side, the coefficient of said power series before $b^k$ is $d_k$. By the left-hand side, the powers of $a$ and $c$ in coefficient before $b^k$ are ``asymptotically separated''
in the following sense: there exists such $M$, that for any monomial $a^\ell b^k c^m$ in the left-hand side either $\ell$ or $m$ does not exceed $M$. 

Finally, let us think about the value $d_k$ for a bit. It has a lot of monomials of total degree $k$:
they correspond to the product of $a^k + a^{k-1}c + \ldots + ac^{k-1} + c^k$ and the 
summand $1$ from the sum $1 + aq_a(a, c) + cq_c(a, c)$. The rest of the monomials from $d_k$
are of total degree $k + 1$ or greater, because they correspond to the products of  $a^k + a^{k-1}c + \ldots + ac^{k-1} + c^k$ by monomials from $aq_a(a, c) + cq_c(a, c)$. Hence, they will
not cancel out with any of the summands $a^i c^{k - i}$. Therefore, $d_k$ contains monomials
$a^i c^{k - i}$ for all $i$ from $0$ to $k$. Finally, choose $i = k - i := M + 1$ or, in simpler terms,
$k := 2M + 2$, $i := M + 1$. Then, in the left-hand side, all monomials of the coefficient before $b^k$ do not exceed degree $M$ either in variable $a$ or in variable $c$. But in the right-hand
side we have a monomial $a^{M + 1} c^{M + 1}$ before $b^k$. Contradiction.
\end{proof}
\begin{rem} It is very important that the denominator in Equation~\eqref{f_lang} is an invertible polynomial. For example, consider a polynomial $a + c$ instead. Then, $d_k = (a + c)(a^k + a^{k-1}c + \ldots + ac^{k-1} + c^k) = a^{k + 1} + c^{k + 1}$, meaning that we have achived
a ``strict'' separation of $a$-s and $c$-s, let alone an ``asymptotic'' one.
\end{rem}

\section{Conclusion}

We have established an exact characterization of the subsets $a_1^* a_2^* \ldots a_k^*$ described
by \emph{linear} GF(2)-grammars and an \emph{almost} (as we have noted before, there is 
a small difference between the proven upper and lower bounds) exact characterization of 
the subsets $a_1^* a_2^* \ldots a_k^*$. Hence, we have almost fixed the ``imprecision'' of the
results of the previous paper~\cite[Section 5]{Makarov_DLT}. 

It would be nice to get rid of the word ``almost'' here and establish a complete characterization
for the case of general GF(2)-grammars as well. Hopefully, it is possible to reduce the 
remaining imprecision to checking validity of some purely algebraic statements, similarly
to one of the conjectures from the previous paper~\cite[Conjecture 2]{Makarov_DLT}.

\section{Acknowledgements}

This work is supported by Russian Science Foundation, project 23-11-00133.


\begin{thebibliography}{00}

\bibitem{automatic-sequences} J.-P. Allouche, J. Shallit,
	\emph{Automatic Sequences: Theory, Applications, Generalizations},
	Cambridge University Press, 2003.

\bibitem{autebert} J.-M. Autebert, J. Beauquier, L. Boasson, M. Nivat,
	\href{http://doi.org/10.1051/ita/1979130403631}
	{``Quelques problemes ouverts en theorie des langages algebriques''}, 
	\emph{RAIRO - Theoretical Informatics and Applications - Informatique Theorique et Applications},	
	Volume 13, number 4 (1979), 363--378. 

\bibitem{gf2} E. Bakinova, A. Basharin, I. Batmanov, K. Lyubort, A. Okhotin, E. Sazhneva,
  \href{https://doi.org/10.1016/j.ic.2020.104672}
  {``Formal languages over GF(2)''},
  \emph{Information and Computation},
  283 (2022), article 104672.

\bibitem{BarhillelPerlesShamir} Y. Bar-Hillel, M. Perles, E. Shamir,
	{``On formal properties of simple phrase-structure grammars''},
	\emph{Zeitschrift f\"ur Phonetik, Sprachwissenschaft und Kommunikationsforschung},
	14 (1961), 143--177.

\bibitem{unambiguous-algebraic} N. Chomsky, M. P. Sch\"utzenberger,
	\href{https://doi.org/10.1016/S0049-237X(08)72023-8}
	{``Algebraic theory of context-free languages''},
	\emph{Studies in Logic and the Foundations of Mathematics},
	35 (1963), 118--161

\bibitem{flajolet} P. Flajolet,
	\href{https://doi.org/10.1016/0304-3975(87)90011-9}
	{``Analytic methods and ambiguity of context-free languages''},
	\emph{Theoretical Computer Science},
	49 (1987), 283--309

\bibitem{christol} G. Christol,
         \href{https://doi.org/10.1016/0304-3975(79)90011-2}
         {``Ensembles presque periodiques $k$-reconnaissables''},
         \emph{Theoretical Computer Science},
         9 (1979), 141--145.
         
\bibitem{unamb-comp} T. N. Hibbard, J. Ullian,
	\href{https://doi.org/10.1145/321356.321366}
	{``The independence of inherent ambiguity from complementedness
	among context-free languages''},
	\emph{Journal of the ACM},
	13 (1966), 588--593

\bibitem{bounded-original} S. Ginsburg, E. H. Spanier,
	\href{https://doi.org/10.2307/1994067}
	{``Bounded ALGOL-like languages''},
	\emph{Transactions of the American Mathematical Society},
	Volume 113, number 2 (1964), 333-368
	
\bibitem{bounded-semi-linear} S. Ginsburg, E. H. Spanier,
	\href{https://doi.org/10.2140/pjm.1966.16.285}
	{``Semigroups, Presburger formulas, and languages''},
	\emph{Pacific Journal of Mathematics},
	Volume 16, number 2 (1966), 285--296

\bibitem{unamb-class} S. Ginsburg, J. Ullian,
	\href{https://doi.org/10.1145/321312.321318}
	{Ambiguity in context-free languages}
	\emph{Journal of the ACM},
	13 (1966), 62--89

\bibitem{koechlin} F. Koechlin,
	\href{https://doi.org/10.4230/LIPIcs.FSTTCS.2022}
	{``New Analytic Techniques for Proving the Inherent Ambiguity of Context-Free
Language''},
	\emph{Foundations of Software Technology and Theoretical Computer Science 2022}
	(Chennai, India, 2022)
	
\bibitem{Makarov_DLT} V. Makarov,
	\href{https://doi.org/10.1007/978-3-030-81508-0_23}
	{``Bounded languages described by GF(2)-grammars''},
	\emph{Developments in Language Theory}
	(DLT 2021, Porto, Portugal, August 16--20, 2021),
	LNCS 12811, 279--290.

\bibitem{grammars-gf2} V. Makarov, A. Okhotin,
	\href{https://doi.org/10.1007/978-3-030-10801-4_25}
	{``On the expressive power of GF(2)-grammars''}, 
	\emph{SOFSEM 2019: Theory and Practice of Computer Science} 
	(Novy Smokovec, Slovakia, 27-30 January 2019), 
	LNCS 11376, 310-323.
	
\bibitem{unamb-linear-compl} O. Martynova, A. Okhotin,
	\href{https://doi.org/10.1016/j.ic.2023.105031}
	{``Non-closure under complementation for unambiguous
linear grammars''},
	\emph{Information and Computation},
	292 (2023), article 105031

\bibitem{Maurer} H. A. Maurer,
	\href{http://dx.doi.org/10.1145/362052.362065}
	{``A note on the complement of inherently ambiguous context-free languages''},
	\emph{Communications of the ACM},
	13:3 (1970), p. 194.
	
\bibitem{Terrier} V. Terrier,
	\href{https://doi.org/10.1016/0304-3975(94)00212-2}
	{``On real-time one way cellular array''},
	\emph{Theoretical Computer Science},
	141 (1995), 331--335 

\end{thebibliography}
\end{document}